\newtheorem{theorem}{Theorem}[section]
\newtheorem{lemma}[theorem]{Lemma}
\newtheorem{example}[theorem]{Example}
\newtheorem{proposition}[theorem]{Proposition}
\theoremstyle{remark}
\newtheorem*{remark}{Remark}
\newcommand{\ds}{\displaystyle}
\newcommand{\new}{\vskip0.4cm\noindent}
\newcommand{\bi}{\bibitem}
\newcounter{smalllist}
\begin{document}

\title[Non-Random Perturbations of the Anderson Hamiltonian]{Non-Random Perturbations of the
Anderson Hamiltonian in the 1-D case.}

\author[J. Holt, S. Molchanov, B. Vainberg]{J. Holt$^1$, S. Molchanov$^2$, B. Vainberg$^3$}

\thanks{$^{1}$ University of South Carolina Lancaster, Lancaster, SC 29721, USA.
email: jholt@mailbox.sc.edu}
\thanks{$^{2}$ University of North Carolina Charlotte, Charlotte, NC 28223, USA.
email: smolchan@uncc.edu}
\thanks{$^{3}$ University of North Carolina Charlotte, Charlotte, NC 28223, USA.
email: brvainbe@uncc.edu}

\begin{abstract} Recently (see \cite{MV}), two of the authors applied the Lieb method to the study of the negative
spectrum for particular operators of the form $H=H_0-W$.  Here, $H_0$ is the generator of the
positive stochastic (or sub-stochastic) semigroup, $W(x) \geq 0$ and $W(x) \to 0$ as $x \to
\infty$ on some phase space $X$. They used the general results in several ``exotic" situations,
among them the Anderson Hamiltonian $H_0$.  In the 1-d case, the subject of the present paper,
we will prove similar but more precise results.

\new \textsc{Keywords}: Anderson Hamiltonian, Sch\"{o}dinger operator, negative eigenvalues
\end{abstract}

\maketitle

\section{Introduction} \label{s1}

\new  The basic Anderson Hamiltonian $H_0$ has the form $H_0 \psi =-\psi''+V_\omega \psi$ where $V_\omega(x)$ is
a nonnegative random potential. We'll consider here $V_\omega$ in the following way.  Let $x_n \in
\mathbb{R}$ be an increasing sequence and consider a partition of $\mathbb{R}$ into intervals
$I_k$ defined by $I_k =\{x : -l \leq x-x_k \leq l\}$ for $k \in \mathbb{Z}$ and positive $l$.  Let
$L_k=x_{k+1}-x_k-2l$ represent the distance between consecutive intervals $I_k$.  We'll suppose
that $L_k(\omega)\geq 0, \omega \in (\Omega,\mathcal{F},P)$ are independent and identically
distributed random variables which are unbounded from above, with some density $p(x)$ and finite
expectation $$\mathbb{E}[L_k]=\int^\infty_0 xp(x) \ dx.$$  Define the random potential $V_\omega$
by
$$V_\omega(x)=\ds \sum_{k \in \mathbb{Z}} h \mathbb{I}_k(x)$$ where $\mathbb{I}_k$ is the indicator
function of the interval $I_k$ and $h>0$. It is easy to see that $H_0 \geq 0$ and that each
realization $P$-a.s contains arbitrarily long intervals where $V_\omega=0$. This implies that
Sp$(H_0)=[0,\infty)$ $P$-a.s. Let $W(x)$ be a continuous function on $\mathbb{R}$ such that $W(x)
\geq 0$ and $W(x) \to 0$ as $|x| \to \infty$. Then the negative spectrum $\mbox{Sp}(H)$ of the
non-random perturbation of $H_0$ defined by $H=H_0-W$ is discrete with a possible accumulation
point only at 0, and the random variable $\mathcal{N}_0(W)=\#\{\lambda_i(\omega) <0\}$ can be
either finite or infinite. Here, $\{\lambda_i\}=\mbox{Sp}(H) \cap (-\infty,0)$ so that
$\mathcal{N}_0(W)$ represents the number of negative eigenvalues. Note that the event $\{\omega
\in \Omega : \mathcal{N}_0(W)=\infty\}$ belongs to the tail $\sigma$-algebra of the potential
$V_\omega$ and therefore the Kolmogorov zero-one law implies that either
$P\{\mathcal{N}_0(W)<\infty\}=1$, or $P\{\mathcal{N}_0(W)<\infty\}=0$. Our goal in this work is to
find conditions on $W$ for which $P(\{\mathcal{N}_0<\infty\})=1$ and
$\mathbb{E}[\mathcal{N}_0]<\infty$.  In particular, we'll establish the following fact for a
Bernoulli potential as a consequence of our general results on the wider class of Kronig-Penney
type potentials. Such results are related to the classical renewal process in reliability theory.

\begin{theorem} \label{exponentialtails} For $h>0$ and any $0<p<1$ let $V_\omega=h\sum_{k \in \mathbb{Z}}
\epsilon_k(\omega) \mathcal{X}_k$ where $\{\epsilon_k(\omega)\}$ are independent and
identically distributed with $P(\epsilon_k=1)=p$ and $P(\epsilon_k=0)=1-p$. Here
$\mathcal{X}_k$ is the indicator of the interval $[k-1/2,k+1/2)$. Then if $W(x) < \ds
C_p/\ln^2(|x|+1)$ for large enough $x$, then $P\{\mathcal{N}_0(W)<\infty\}=1$, while $P\{\mathcal{N}_0(W)<\infty\}=0$
if $W(x)> \ds C_p/\ln^2(|x|+1)$.  Here, $C_p$ is the constant
$C_p=\log^2_{1/p} \pi^2$.
\end{theorem}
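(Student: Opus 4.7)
The plan is to recognize the Bernoulli potential as a special case of the Kronig-Penney potentials analyzed earlier in the paper and to apply the general finiteness/infiniteness criterion developed there, then to carry out the combinatorial calculation specific to a geometric distribution of gap lengths. In this setting the ``blocks'' $I_k$ of the general framework are the unit cells $[k-1/2,k+1/2)$ with $\epsilon_k=1$, and the gap lengths $L_k$ count the consecutive cells with $\epsilon=0$ lying between successive ones; these $L_k$ are i.i.d.\ geometric with tail $P(L_k\ge n)=(1-p)^n$, so they satisfy the exponential-tail hypotheses of the general Kronig-Penney theorem.

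The underlying mechanism is that negative eigenvalues of $H=H_0-W$ arise from long gaps of $V_\omega$: on an interval of length $L$ with $V_\omega\equiv 0$, the Dirichlet ground-state energy of $-d^2/dx^2$ is $\pi^2/L^2$, so a trial function supported on such a gap near $x$ gives a negative Rayleigh quotient for $H$ provided roughly $W(x)\geq \pi^2/L^2$. For the critical profile $W(x)\sim C/\ln^2(|x|+1)$ the required gap length at position $x$ is $L(x)\sim \pi\ln|x|/\sqrt{C}$, so the question reduces to asking how often the Bernoulli sequence produces a run of zeros of that length near $x$.

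For the first assertion (finiteness of $\mathcal{N}_0(W)$ when $W<C_p/\ln^2(|x|+1)$) I would invoke a Birman--Schwinger- or Lieb-type upper bound from the general Kronig-Penney analysis, which dominates $\mathcal{N}_0(W)$ by a sum, over unit cells, of indicators of local ``trapping'' events (the existence of a sufficiently long nearby gap on which $W$ beats $\pi^2/L^2$). Taking expectations and using the geometric tail of the $L_k$, the expected total is finite for $C<C_p$, and Markov's inequality promotes finiteness in expectation to almost-sure finiteness. For the second assertion (infiniteness when $W>C_p/\ln^2(|x|+1)$) I would pick a widely spaced sequence of points $x_n\to\infty$ with pairwise disjoint windows around them, verify that the events ``a gap of length $\ge L(x_n)$ lies in window $n$'' are independent across $n$ and that their probabilities sum to infinity, and apply the second Borel--Cantelli lemma. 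The resulting trial functions have disjoint supports, hence are mutually orthogonal, and the min-max principle then produces infinitely many negative eigenvalues.

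The main obstacle is matching the sharp constant $C_p=\log^2_{1/p}\pi^2$ coming out of the general criterion for the specific geometric distribution of $L_k$. A direct single-gap, hard-wall heuristic reproduces the correct functional form $W\sim C/\ln^2|x|$, but the precise coefficient is delicate: it requires a careful Lifshitz-type analysis of the integrated density of states of the Bernoulli Anderson Hamiltonian at $E=0^+$, accounting for the finite wall-height $h$ (tunneling through walls) and the transfer-matrix structure of $H_0$, rather than only isolated single-gap Dirichlet modes. Once $C_p$ is correctly identified as $\log^2_{1/p}\pi^2$ in the general criterion, the Borel--Cantelli dichotomy described above yields both halves of the theorem.
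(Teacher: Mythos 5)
Your reduction to a Kronig--Penney model with i.i.d.\ geometric gaps ($P(L_k\ge n)=(1-p)^n$, matching the paper's Example 5.2 with $q=1-p$) is exactly right, and your divergence half is essentially the paper's: disjoint windows, independence, second Borel--Cantelli, and orthogonal trial functions (the paper packages this as Lemma 4.2 via independence of the events $\{\mu^{(D)}_{0,k}<w^-_{k,\varepsilon}\}$). That half even delivers the sharp constant for free, since on a gap $V_\omega\equiv 0$ and a Dirichlet trial function gives Rayleigh quotient at most $\pi^2/L^2-\min W$ exactly. The genuine gap is in the finiteness half. You invoke a ``Birman--Schwinger- or Lieb-type upper bound'' dominating $\mathcal{N}_0(W)$ by a sum of local trapping indicators, but no such bound is available off the shelf in one dimension (CLR-type estimates fail for $d=1$), and nothing in your sketch rules out negative eigenvalues created by several gaps coupled through tunneling across the finite-height bumps. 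The paper's substitute is Dirichlet--Neumann bracketing: cutting at the bump centers with Neumann conditions gives $\mathcal{N}_0(W)\le\sum_k\mathcal{N}_{0,N,k}(W)$ with independent blocks, and the explicit transcendental eigenvalue equations of Example 3.3 (the $\coth$/$\tanh$ computations) show $\sqrt{\mu^{(N)}_{0,k}}=\frac{\pi}{L_k}\bigl(1-O(1/L_k)\bigr)$, so Neumann cuts introduce no spurious low modes beyond an additive correction. Without some version of this decoupling step, your upper bound is an assertion, not a proof.

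Your closing paragraph also misdiagnoses where the sharp constant comes from, and in doing so leaves the entire content of the theorem unproved: you defer the identification of $C_p$ to a ``careful Lifshitz-type analysis of the integrated density of states,'' which you do not carry out. The paper's point is that no IDS or transfer-matrix analysis is needed. The finite wall height $h$ only changes the critical condition from $L_k>\pi/\sqrt{w_k}$ to $L_k>\pi/\sqrt{w_k}-c$ with $c=c_N$ or $c_D$ a constant depending on $h,l$ (Lemmas 4.1 and 4.2), and for exponential tails $P(L_k>t-c)\le e^{\eta c}\,P(L_k>t)$: an additive shift becomes a bounded multiplicative factor, which cannot alter convergence or divergence of $\sum_k P\bigl(L_k>\pi/\sqrt{w_k}-c\bigr)$. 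With $\eta=\ln\frac{1}{1-p}$ and $W$ on either side of $\eta^2\pi^2/\ln^2 x$ by a factor $(1\mp\delta)$, the terms behave like $k^{-1/(1\mp\delta)}$ up to logarithms (the strong law of large numbers, $x_k\sim\alpha k$, shifts $\ln x_k$ by only an additive constant, which the $\ln^2$ scale absorbs), giving Theorem 5.1 and hence the Bernoulli case. So the single-gap Dirichlet computation, corrected additively for tunneling, already yields the exact coefficient; the Lifshitz-tail route you gesture at is both unnecessary and, as presented, a missing step rather than a proof.
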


\new Our results here are similar to the $d$-dimensional results obtained in \cite{MV}.  However, whereas we were able to
establish a distinct borderline between finitely many and infinitely many negative eigenvalues in the one dimensional case by finding
$C_p$ exactly, the result in \cite{MV} only shows the following:

\begin{theorem} Let $V_\omega=h\sum_{k \in \mathbb{Z}^d} \epsilon_k(\omega) \mathcal{X}_k$ where
$\mathcal{X}_k$ is the indicator of the cube $\{(x_1,\ldots, x_d): -1/2 \leq x_i-k_i<1/2\}$. There exist positive
constants $C_-<C_+$ so that if $0 \leq W(x) \leq \ds
C_-/\ln^2(|x|+2)$
then $P\{\mathcal{N}_0(W)<\infty\}=1$ while if
$W(x) \geq \ds
C_+/\ln^2(|x|+2)$ then $P\{\mathcal{N}_0(W)=\infty\}=1$.
\end{theorem}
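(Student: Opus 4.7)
The plan is to combine an upper bound on the expected number of negative eigenvalues (giving almost-sure finiteness when $W$ is small) with an explicit Borel--Cantelli construction of infinitely many mutually orthogonal trial functions with negative Rayleigh quotient (giving almost-sure infiniteness when $W$ is large).

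For the first statement I would start from a Lieb-type trace inequality: for any Jensen function $\phi$ with $\phi(s)\ge \mathbf 1_{s\ge 1}$,
\[
\mathcal N_0(W) \;\le\; C_\phi\int_0^\infty \int_{\mathbb R^d} e^{-tH_0}(x,x)\,\phi\bigl(tW(x)\bigr)\,dx\,\frac{dt}{t},
\]
valid path-wise in $\omega$. Taking expectations and using stationarity of $V_\omega$,
\[
\mathbb E\bigl[\mathcal N_0(W)\bigr]\;\le\;C_\phi\int_0^\infty \mathbb E\bigl[e^{-tH_0}(0,0)\bigr]\,\frac{dt}{t}\int_{\mathbb R^d}\phi\bigl(tW(x)\bigr)\,dx.
\]
The essential input is a Lifshitz-tail decay $\mathbb E[e^{-tH_0}(0,0)]\le C e^{-c\,t^{d/(d+2)}}$ for $t$ large, paired with the free-heat-kernel bound $C t^{-d/2}$ for $t$ small. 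Splitting the $t$-integral and using the hypothesis $W(x)\le C_-/\ln^2(|x|+2)$ with $C_-$ small enough, a routine calculation shows that the right-hand side is finite, whence $\mathcal N_0(W)<\infty$ almost surely.

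For the second statement I would apply Borel--Cantelli to the disjoint dyadic annuli $A_n=\{|x|\in[R_n,2R_n]\}$, $R_n=2^n$. The goal is to show that almost surely each $A_n$ (for $n$ large) contains a configuration $Q_n$ of neighbouring lattice sites with $\epsilon_k=0$ on which the Dirichlet Laplacian has ground-state energy strictly less than $\inf_{Q_n}W$. Since by hypothesis $W(x)\ge C_+/\ln^2(|x|+2) \gtrsim C_+/n^2$ throughout $A_n$, it suffices that $Q_n$ admit a Dirichlet eigenvalue $\le C_+/n^2$, i.e.\ that $Q_n$ contain a sufficiently large empty region. Independence of the $\epsilon_k$ and a standard second-moment argument yield $\sum_n P(A_n \text{ contains such a }Q_n)=\infty$ once $C_+$ is large enough. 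The corresponding Dirichlet ground states $\varphi_n$, supported on the disjoint $Q_n$, then satisfy $\langle(H_0-W)\varphi_n,\varphi_n\rangle<0$, and the min--max principle produces infinitely many negative eigenvalues of $H$.

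The main obstacle is the quantitative matching of the two halves. The Lieb bound is governed by the annealed heat kernel, reflecting the \emph{typical} behaviour of $V_\omega$, while the Borel--Cantelli construction exploits the \emph{rare} atypical configurations where an empty region of the right shape happens to form. Closing the numerical gap requires both a sharp Lifshitz asymptotic (which in $d\ge 2$ is known only up to logarithmic corrections) and a sharp large-deviation description of the empty-region geometry along every direction; neither is presently available with matching constants, which is exactly why only $C_-<C_+$ is asserted. By contrast, in the one-dimensional setting of the present paper, renewal-theoretic arguments allow one to pin down the extremal gap length exactly and close the gap to the sharp constant $C_p=\log_{1/p}^2\pi^2$.
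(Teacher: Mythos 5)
Two preliminary orientations. First, the paper offers no proof of this theorem: it is imported verbatim from \cite{MV} as a foil for the sharper one-dimensional results, and the only commentary is the remark that the constants $C_\pm$ are tied to hard percolation problems. So your attempt has to be judged on its own merits; its first half (the Lieb trace inequality) is indeed the method announced in the abstract as the engine of \cite{MV}, but both halves fail quantitatively at the $\ln^{-2}$ scaling. The most serious gap is in your second half. For an empty configuration $Q_n\subset A_n$ to carry Dirichlet ground energy $\le C_+/n^2$, $Q_n$ must contain a ball of radius $r_n\asymp n/\sqrt{C_+}$ --- a long thin vacant tube does not help, since with bump height $h$ fixed the transverse confinement keeps the local ground energy of any width-$O(1)$ vacant set bounded below. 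Emptying such a ball costs probability $q^{c r_n^d}=\exp(-c' n^d)$ (with $q$ the vacancy probability), while the annulus $A_n$ offers only $O((R_n/r_n)^d)=e^{dn\ln 2}$ disjoint placements. For every $d\ge 2$ the first moment $\exp(dn\ln 2-c'n^d)$ tends to zero for \emph{every} $C_+$, so $\sum_n P(\cdot)<\infty$ and no second-moment argument can produce divergence; your claim that ``$C_+$ large enough'' makes the Borel--Cantelli sum diverge is simply false in $d\ge2$. The count balances exactly when the energy threshold is $(\ln R_n)^{-2/d}$, i.e.\ the genuine $d$-dimensional borderline is $\ln^{2/d}(|x|+2)$ (which is how the result should be read from \cite{MV}; it coincides with the printed $\ln^2$ only at $d=1$, the case the present paper treats by renewal theory and Dirichlet--Neumann bracketing, not by your route).

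The first half has the complementary defect. With the Donsker--Varadhan bound $\mathbb{E}[e^{-tH_0}(0,0)]\le Ce^{-ct^{d/(d+2)}}$ and $W\le C_-/\ln^2(|x|+2)$, the spatial factor $\int\phi(tW(x))\,dx$ grows like $\exp\bigl(d\sqrt{C_-t/a}\,(1+o(1))\bigr)$, since $\phi(tW(x))$ lives on $\{|x|\lesssim e^{\sqrt{C_-t/a}}\}$. In $d=1$ the exponent $t^{1/3}$ is beaten by $\sqrt{t}$ and the $t$-integral diverges for every $C_->0$: your ``routine calculation'' fails precisely in the dimension where the $\ln^{-2}$ scaling is correct. ($d=2$ is the marginal case, where smallness of $C_-$ does help; for $d\ge3$ your bound gives finiteness for \emph{all} $C_-$, which should have been a warning that the two halves can never both hold as stated --- indeed they hold simultaneously in no dimension.) More structurally, the annealed Lieb bound yields $\mathbb{E}[\mathcal{N}_0]<\infty$ only when $W$ decays faster than $\ln^{-(d+2)/d}$, strictly more decay than the true borderline $\ln^{-2/d}$ in every $d$: averaging the heat kernel pays the rare-vacancy probability at every point of an exponentially large support, whereas negative eigenvalues arise only where the vacant region actually sits. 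So the finiteness half must also pass through quenched estimates --- almost surely every vacant ball at distance $R$ has radius $O((\ln R)^{1/d})$, fed into bracketing or the pathwise Lieb bound --- which is exactly why the paper's remark links $C_\pm$ to percolation. Your closing paragraph correctly senses an annealed/quenched mismatch but misplaces it: the obstruction is not unmatched Lifshitz constants, it is that the annealed bound proves a strictly weaker statement in every dimension.
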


\begin{remark} Estimations of the constants $C_-$ and $C_+$ in the above theorem are not known as
they are related to difficult problems in percolation theory.
\end{remark}

\begin{remark} Let's note that $V_\omega(x)$ can be viewed as an an asymptotically stationary ``renewal"
process in which $L_k$ can be thought of as the time between successive occurrences of the event
$V_\omega(x)=h$.
\end{remark}

\new Additionally we formulate and prove two more results showing
a clear relationship between the tails $P(L_k>x)$ and the function $W$,
and which also demonstrate a distinct borderline between finite and infinitely many eigenvalues.
The first of these results when $L_k$ have very ``light tails", that is $P(L_k>x)
\sim \exp(-\eta x^\alpha/\alpha)$ as $x \to \infty$ where $\eta,\alpha>0$, is actually a generalization
of Theorem \ref{exponentialtails}. In fact, Theorem \ref{exponentialtails} can be
obtained by considering the case $\alpha=1$.

\begin{theorem} \label{lighttails} Suppose $P\{L_k > x\} \sim e^{-\eta x^\alpha/\alpha}$ as
$x \to \infty$ for some $\eta,\alpha>0$. Let $C_{\eta,\alpha}=(\eta \alpha^{-1})^{2/\alpha}
\pi^2.$ If for large $k$ $w_k <\ds C_{\eta,\alpha}/\ln^{2/\alpha} k$, then
$\mathcal{N}_0(W)<\infty$ $P$-a.s. On the other hand, if for large $k$, $w_k> \ds
C_{\eta,\alpha}/\ln^{2/\alpha} k$, then $\mathcal{N}_0(W)=\infty$ $P$-a.s.
\end{theorem}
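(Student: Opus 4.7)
The plan is to reduce Theorem \ref{lighttails} to a sharp Borel--Cantelli dichotomy for the independent events $A_k = \{L_k > \pi/\sqrt{w_k}\}$. The reduction rests on the classical observation that a negative eigenvalue of $H = H_0 - W$ localized on the gap $G_k$ exists essentially iff $L_k$ exceeds the Dirichlet bound-state length $\pi/\sqrt{w_k}$; since the barriers $I_k$ have fixed height $h$, they screen eigenfunctions away from the walls and make this reduction quantitatively sharp. The key probabilistic computation is
\[
P(L_k > \pi w_k^{-1/2}) \;\sim\; \exp\!\bigl(-\tfrac{\eta\pi^\alpha}{\alpha}\, w_k^{-\alpha/2}\bigr),
\]
and substituting $w_k = c/\ln^{2/\alpha} k$ gives $k^{-\gamma}$ with $\gamma = (\eta/\alpha)(\pi^2/c)^{\alpha/2}$; the value $c = C_{\eta,\alpha} = (\eta/\alpha)^{2/\alpha}\pi^2$ is precisely the one for which $\gamma = 1$, confirming the announced threshold.

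For the infinite-eigenvalue direction, I assume $w_k > C_{\eta,\alpha}/\ln^{2/\alpha} k$. Independence of the $L_k$'s and the second Borel--Cantelli lemma yield that $A_k$ holds almost surely for infinitely many $k$. On each such gap I take the first Dirichlet mode $\psi_k(x) = \sin(\pi(x-a_k)/L_k)\,\mathbb{I}_{G_k}(x)$, which satisfies
\[
\frac{\langle \psi_k, H\psi_k\rangle}{\|\psi_k\|^2} \;=\; \frac{\pi^2}{L_k^2} - \overline W_k \;<\; 0,
\]
where $\overline W_k$ denotes the $\psi_k^2$-averaged value of $W$ on $G_k$ (which dominates $w_k$ in the natural sense). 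The $\psi_k$ have pairwise disjoint supports, so the min--max principle forces $\mathcal{N}_0(W) = \infty$ almost surely.

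For the finite-eigenvalue direction, I assume $w_k < C_{\eta,\alpha}/\ln^{2/\alpha} k$. The target is an upper bound of the form $\mathcal{N}_0(H) \leq \#\{k : L_k > (1-\epsilon)\pi/\sqrt{w_k}\} + O(1)$, after which the first Borel--Cantelli lemma applied to $\sum_k P(L_k > (1-\epsilon)\pi/\sqrt{w_k}) < \infty$ concludes. Two natural routes are available. The first is to follow the Lieb-type machinery of \cite{MV}, bounding $\mathcal{N}_0$ by a trace integral involving the Anderson heat kernel $p_0(t,x,x)$ and plugging in an annealed Lifshitz-tail estimate derived from the hypothesis on $P(L_k > x)$ (optimizing over the probability of a gap of length $\pi/\sqrt{E}$ near $x$). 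The second is a purely variational upper bound via Neumann bracketing at carefully chosen points interior to each barrier $I_k$, using exponential decay (rate $\sqrt{h}$) of eigenfunctions through a barrier to show that each resulting block can contribute to $\mathcal{N}_0$ only when its enclosed gap is sufficiently long.

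The main obstacle is matching the \emph{sharp} constant $C_{\eta,\alpha}$ in the upper bound. Both approaches naturally lose multiplicative constants --- the Lieb trace bound via its implicit $L^p$ inequality, and Neumann bracketing via the eigenfunction ``spread'' into the barrier halves --- so care is required to align the critical gap length with $\pi/\sqrt{w_k}$ exactly, not merely up to a constant. This precision is what distinguishes the present one-dimensional result from the $d$-dimensional Theorem~1.2, where only ordered constants $C_- < C_+$ are obtained.
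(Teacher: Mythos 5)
Your divergence half is sound: restricting to the bare gap with Dirichlet test functions $\psi_k=\sin(\pi(x-a_k)/L_k)$ of disjoint supports, combined with the second Borel--Cantelli lemma for the independent events $\{L_k>\pi/\sqrt{w_k}\}$, is essentially the paper's Dirichlet-bracketing argument (its Theorem 4.2 together with Lemma \ref{mainlemma2}), and is in fact slightly more elementary, since working on the gap alone avoids any analysis of the barriers and gives the critical length $\pi/\sqrt{w_k}$ with no correction term. Your tail computation and identification of the critical constant $C_{\eta,\alpha}$ are also correct. (One housekeeping point you elide: to pass between the index $k$ and the spatial variable $x$ one needs $x_k/k\to\alpha=\mathbb{E}[L]+2l$ by the strong law of large numbers, which the paper uses to build the approximations $w^{\pm}_{k,\varepsilon}$; since $\ln^{2/\alpha}(\alpha k)\sim\ln^{2/\alpha}k$ this does not move the constant.)

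The genuine gap is the finiteness half, which you leave as a program with two candidate routes and explicitly flag the sharp constant as an unresolved ``main obstacle.'' That obstacle is illusory, and the step you are missing is exactly the paper's Example 3.2: for a Neumann block consisting of a gap of length $L_k$ flanked by two half-bumps of fixed height $h$, the ground state satisfies the asymptotics (\ref{evaleq7}), i.e. $\sqrt{\mu^{(N)}_{0,k}}=\frac{\pi}{L_k}\bigl(1-\frac{B_{0,N}}{2L_k}\bigr)+O(L_k^{-3})$. The fixed-height barriers force the Neumann ground state down to essentially the Dirichlet value (this is the paper's remark that $\mu^{(N)}_0$ and $\mu^{(D)}_0$ are asymptotically equivalent), so Neumann bracketing loses only an \emph{additive} length correction: the event $\{\mu^{(N)}_{0,k}<w^+_{k,\varepsilon}\}$ becomes $\{L_k>\pi/\sqrt{w^+_{k,\varepsilon}}-c_N\}$ (Lemma \ref{mainlemma1}), not a multiplicatively dilated one. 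Since here $\pi/\sqrt{w_k}\sim(\pi^2/C)^{1/2}\ln^{1/\alpha}k$, an additive shift by $c_N$ changes the tail probability $\exp(-\eta x^{\alpha}/\alpha)$ only by a factor $k^{o(1)}$, which cannot affect the convergence of the Borel--Cantelli series away from the critical constant; hence the threshold $C_{\eta,\alpha}$ survives exactly. With that asymptotic in hand, the paper's Theorem \ref{mainthm1} finishes: on the event that $\mu^{(N)}_{0,k}\geq w^+_{k,\varepsilon}$ the $k$-th Neumann block contributes \emph{no} negative eigenvalue (its lowest eigenvalue is $\mu^{(N)}_{0,k}-w^+_{k,\varepsilon}$), so a.s. all but finitely many blocks are empty and each exceptional block contributes finitely many, giving $\mathcal{N}_0(W)<\infty$ --- note also that your claimed bound $\mathcal{N}_0\leq\#\{k:L_k>(1-\epsilon)\pi/\sqrt{w_k}\}+O(1)$ is not quite the right form, since an exceptional block may carry more than one negative eigenvalue. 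As written, your proposal proves only the divergence direction; the convergence direction requires the eigenvalue equation (\ref{evaleq8}) and its expansion, which you gesture at but do not carry out.
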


\begin{remark} It shows that the borderline between $P$-a.s finiteness
of $\mathcal{N}_0(W)$ goes along the line
$$w_k \sim \frac{C_{\eta,\alpha}}{\ln^{2/\alpha} k}.$$
\end{remark}

\begin{theorem} Suppose $P(L_k>x) \sim c/x^\alpha$ as $x\to \infty$ with $\alpha>1$.
If for large $k$, $w_k<t^{-\beta}$ with $\beta>2/\alpha$, then $\mathcal{N}_0(W)<\infty$ $P$-a.s. and
$\mathbb{E}[\mathcal{N}_0(W)]<\infty$.  If on the other hand, $\beta<2/\alpha$, then $\mathcal{N}_0(W)=\infty$ $P$-a.s. and
$\mathbb{E}[\mathcal{N}_0(W)]=\infty$. It shows that the borderline between $P$-a.s finiteness
of $\mathcal{N}_0(W)$ and the expectation $\mathbb{E}[\mathcal{N}_0(W)]$ goes along the line
$$w_k \sim \frac{1}{k^{2/\alpha}}.$$
\end{theorem}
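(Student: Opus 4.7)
The plan is to reduce the problem to a gap-by-gap analysis and then apply the two Borel--Cantelli lemmas to the heavy-tailed gap lengths $L_k$.

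First I would use Dirichlet--Neumann bracketing at the boundaries of every interval $I_k$ (where $V_\omega=h$) and every ``free gap'' $G_k=(x_k+l,\,x_{k+1}-l)$ of length $L_k$. Imposing Dirichlet data raises eigenvalues and gives an upper bound on $\mathcal{N}_0(W)$; Neumann data gives a lower bound. For large $|k|$ the barrier intervals $I_k$ carry $V_\omega=h$ with $W$ small, so they contribute no negative eigenvalue, and everything reduces to counting negative eigenvalues of $-d^2/dx^2-W$ on the free gaps $G_k$. The Dirichlet ground state on $G_k$ has energy $\pi^2/L_k^2$, so by the variational principle a bound state appears on $G_k$ if and only if $W$ on $G_k$ is of size at least a constant multiple of $\pi^2/L_k^2$.

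For the finiteness half, assume $w_k<k^{-\beta}$ with $\beta>2/\alpha$. Since $\alpha>1$ we have $\mathbb{E}[L_1]<\infty$, and the strong law of large numbers gives $x_k\sim k\mathbb{E}[L_1]$ a.s., so that $W$ restricted to $G_k$ is bounded by $Ck^{-\beta}$ for large $k$. A negative eigenvalue on $G_k$ then requires $\pi^2/L_k^2\lesssim k^{-\beta}$, i.e.\ $L_k\gtrsim k^{\beta/2}$. The heavy-tail hypothesis $P(L_k>x)\sim c/x^\alpha$ yields
\[
\sum_k P\bigl(L_k>c'k^{\beta/2}\bigr)\;\sim\;\sum_k c''\,k^{-\alpha\beta/2}<\infty
\]
precisely because $\alpha\beta/2>1$. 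The first Borel--Cantelli lemma then gives $\mathcal{N}_0(W)<\infty$ a.s., and the same summable bound on expected counts gives $\mathbb{E}[\mathcal{N}_0(W)]<\infty$.

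For the opposite direction, $\beta<2/\alpha$ makes the above series diverge. Because the $L_k$ are independent, the second Borel--Cantelli lemma produces infinitely many $k$ with $L_k>k^{\beta/2}$. For each such $k$, using the sine ground state of the Dirichlet Laplacian on $G_k$ as a trial function and the lower bound $W>k^{-\beta}$ on $G_k$, the Rayleigh quotient becomes strictly negative; since these trial functions have disjoint supports and are therefore mutually orthogonal, the min--max principle produces one extra negative eigenvalue per such gap. Hence $\mathcal{N}_0(W)=\infty$ a.s., and $\mathbb{E}[\mathcal{N}_0(W)]=\infty$.

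The main obstacle will be making the bracketing step quantitatively tight: I must check that the Dirichlet and Neumann counts differ by at most an absolute constant per interval (controlled by the behavior of $-W$ inside the barriers $I_k$, where $V_\omega=h$ tames any potential negative contribution for large $|k|$). A secondary subtlety is the interpretation of the hypothesis ``$w_k<t^{-\beta}$'': one has to read $w_k$ as $W$ evaluated near $x_k$ and then invoke the law of large numbers for $x_k$ to convert decay in $k$ into decay in the spatial variable $x$. Once these two translations are in place, the probabilistic counting via the two Borel--Cantelli lemmas is routine.
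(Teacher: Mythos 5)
Your divergence half is sound and matches the paper's (independence of the $L_k$, second Borel--Cantelli, disjointly supported trial functions -- this is exactly the Dirichlet side of the bracketing, cf.\ the paper's Theorem 4.2 and Lemma 4.3), and your use of the strong law of large numbers to convert decay in $k$ into spatial decay is the paper's $W^\pm_\varepsilon$ construction. But the finiteness half has a genuine gap, rooted in two related errors. First, you have the bracketing inequalities backwards: decoupling with Dirichlet conditions \emph{raises} eigenvalues and hence \emph{lowers} the count, so $\mathcal{N}_{0,D}(W)\le \mathcal{N}_0(W)\le \mathcal{N}_{0,N}(W)$ (the paper's Lemma 3.1); an upper bound on $\mathcal{N}_0(W)$ must come from the Neumann decoupling. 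Second, and fatally for your decomposition: you cut at the edges of the bumps, so the free gaps $G_k$ become stand-alone Neumann pieces. The Neumann operator $-d^2/dx^2-W$ on a bare gap with $W\approx w_k>0$ has ground state energy $\approx -w_k<0$ (take the constant function as a trial function), so \emph{every} gap contributes at least one negative eigenvalue and your Neumann upper bound is identically $+\infty$, no matter how fast $w_k\to 0$. The threshold $\pi^2/L_k^2$ you invoke is the Dirichlet ground state energy of the gap, which only controls the lower bound. Your proposed repair -- that Dirichlet and Neumann counts differ by a bounded amount per interval -- cannot work either, since a constant per interval sums to infinity; what is needed is that the Neumann count is \emph{exactly zero} for all large $k$ off a Borel--Cantelli event. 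The paper achieves this by cutting at the bump centers $x_k$, so each Neumann interval $[x_k,x_{k+1}]$ retains half-barriers of height $h$ at both ends; the transcendental equation $k\tan(kL)=\sqrt{h-k^2}\tanh\bigl(\sqrt{h-k^2}\,l\bigr)$ (Example 3.3) then gives $\sqrt{\mu^{(N)}_{0,k}}=\frac{\pi}{L_k}\bigl(1+O(1/L_k)\bigr)$, i.e.\ a \emph{positive} Neumann ground state asymptotically equal to the Dirichlet one. This computation is the substantive analytic input your sketch is missing, and it is what lets the paper run your intended argument (Theorem 4.1 plus Lemma 4.2: $\sum_k P(L_k>\pi/\sqrt{w^+_{k,\varepsilon}}-c_N)<\infty$ implies $\mathcal{N}_0(W)<\infty$ a.s.).

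There is also a secondary gap in the expectation claim. $\mathcal{N}_{0,k}$ is not an indicator: on the event $\{L_k>\pi/\sqrt{w_k}\}$ the $k$-th interval carries roughly $\sqrt{w_k}L_k/\pi$ negative eigenvalues, so summability of $P(L_k>c'k^{\beta/2})$ does not by itself bound $\mathbb{E}[\mathcal{N}_0(W)]$. One needs the first-moment computation the paper does in the introduction via integration by parts (display (1.1)): with $F(x)=P(L_k>x)\sim c/x^\alpha$,
\begin{equation*}
\mathbb{E}[\mathcal{N}_{0,k}]\;\asymp\;\frac{\sqrt{w_k}}{\pi}\int^\infty_{\pi/\sqrt{w_k}}F(x)\,dx+F\Bigl(\frac{\pi}{\sqrt{w_k}}\Bigr)\;\asymp\;w_k^{\alpha/2}\;\le\;k^{-\alpha\beta/2},
\end{equation*}
which is summable precisely when $\alpha\beta/2>1$, i.e.\ $\beta>2/\alpha$ (here $\alpha>1$ makes the tail integral converge). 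This is routine but must be carried out; your phrase ``the same summable bound on expected counts'' papers over it.
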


\begin{remark}
Both of these asymptotic equivalences are corollaries of more general results established in
Theorems 4.1 and 4.2.
\end{remark}

\new It is helpful to first consider the Anderson Hamiltonian $H_0$ on the half axis
defined by $$H_0 \psi =-\psi''+V_\omega \psi, \hskip1cm \psi(0)=0$$ and the case where $h=\infty$
and $W$ constant on the intervals $I_k=[x_k+l,x_{k+1}-l]$. Let $W(x)=w_k$ for $x \in I_k$. With
$h=\infty$, we must interpret $H$ as a direct sum of operators $H_k=-d^2/dx^2-w_k \mathbb{I}_k$ on
a system of decoupled wells $[x_k+l,x_{k+1}-l]$ with Dirichlet boundary conditions at the edges.
For each $k$, the spectrum of $H_k$ is discrete with eigenvalues
$$\lambda_{k,i}=\frac{\pi^2 i^2}{L^2_k}-w_k$$ and $\mathcal{N}_{0,k}(W)=\#\{\lambda_{k,i}<0\}=\ds
\left\lfloor \frac{\sqrt{w_k} L_k}{\pi}\right\rfloor$. The total number of negative eigenvalues
for $H$ is then $$\mathcal{N}_0(W)=\sum^\infty_{k=1} \left\lfloor \frac{\sqrt{w_k} L_k}{\pi}
\right\rfloor.$$  This series of independent random variables is finite if and only if the series
contains finitely many non-zero terms.  But,
$$P\left\{\left\lfloor \frac{\sqrt{w_k} L_k}{\pi}\right\rfloor
\geq 1 \right\}=P\left\{ L_k > \frac{\pi}{\sqrt{w_k}}\right\}= \int^\infty_{\pi/\sqrt{w_k}} p(x)
dx.$$  Due to the Borel-Cantelli lemmas, we have the following proposition:

\begin{proposition} $\mathcal{N}_0(W)< \infty$ $P$-a.s. if and only if $$\sum^\infty_{k=1}
\int^\infty_{\pi/\sqrt{w_k}} p(x) dx =\sum^\infty_{k=1} P(L_1>\pi/\sqrt{w_k})<\infty.$$
\end{proposition}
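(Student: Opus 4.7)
The plan is to reduce the problem to a direct application of the two Borel–Cantelli lemmas, exploiting the independence of the $L_k$'s already built into the model.

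First I would note that since each summand $\lfloor \sqrt{w_k}L_k/\pi\rfloor$ is a nonnegative integer, the series
$$\mathcal{N}_0(W)=\sum_{k=1}^\infty \left\lfloor \frac{\sqrt{w_k}L_k}{\pi}\right\rfloor$$
converges if and only if all but finitely many of its terms vanish. Hence $\{\mathcal{N}_0(W)<\infty\}$ coincides with the event that only finitely many of the events
$$A_k=\left\{\left\lfloor \frac{\sqrt{w_k}L_k}{\pi}\right\rfloor\geq 1\right\}=\{L_k>\pi/\sqrt{w_k}\}$$
occur, i.e.\ with the complement of $\limsup_k A_k$.

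Second, since $(L_k)$ are i.i.d., the events $(A_k)$ are independent, and because $L_k\stackrel{d}{=}L_1$ we have $P(A_k)=P(L_1>\pi/\sqrt{w_k})$. Now both Borel–Cantelli lemmas apply. If $\sum_k P(A_k)<\infty$, the first Borel–Cantelli lemma gives $P(\limsup A_k)=0$, hence $\mathcal{N}_0(W)<\infty$ $P$-a.s. If, on the other hand, $\sum_k P(A_k)=\infty$, the second Borel–Cantelli lemma (whose hypothesis of independence is satisfied) yields $P(\limsup A_k)=1$, so infinitely many terms in the series are nonzero, and therefore $\mathcal{N}_0(W)=\infty$ $P$-a.s.

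There is really no main obstacle here: the decoupling by Dirichlet boundary conditions at the sites $x_k\pm l$ is what makes the eigenvalue count exactly $\sum_k \lfloor \sqrt{w_k}L_k/\pi\rfloor$, and once this explicit formula is in hand the statement is a textbook Borel–Cantelli argument. The only point requiring a word of care is the zero–one law aspect already mentioned in the introduction, which guarantees that $P\{\mathcal{N}_0(W)<\infty\}$ is either $0$ or $1$, so that the dichotomy produced by Borel–Cantelli is indeed sharp.
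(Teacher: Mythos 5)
Your proposal is correct and follows essentially the same route as the paper: the paper likewise observes that the series $\sum_k \lfloor \sqrt{w_k}L_k/\pi\rfloor$ of independent, integer-valued terms is finite iff only finitely many terms are nonzero, computes $P\{\lfloor \sqrt{w_k}L_k/\pi\rfloor \geq 1\} = P\{L_k > \pi/\sqrt{w_k}\}$, and invokes both Borel--Cantelli lemmas. Your explicit remark that the zero--one law makes the dichotomy sharp is a nice touch the paper states only in the introduction, but the substance of the argument is identical.
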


\new Considering the expectation for $\mathcal{N}_0(W)$, it is clear that
\begin{equation} \int^\infty_{\pi/\sqrt{w_k}} \left(\frac{\sqrt{w_k}}{\pi}x-1\right)p(x) dx \leq
\mathbb{E}[\mathcal{N}_{0,k}] \leq \int^\infty_{\pi/\sqrt{w_k}} \frac{\sqrt{w_k}}{\pi}x p(x) dx
\label{expect}
\end{equation}
since $$\mathbb{E}[\mathcal{N}_{0,k}]=\ds \sum_{j \geq 1} P(\mathcal{N}_{0,k} \geq j)=\sum_{j\geq
1} P\left(L_k > \frac{j\sqrt{w_k}}{\pi}\right).$$  With $F_k(x)=P(L_k>x)$ and the condition of
finiteness of the expectation of $L_k$, it follows by integration by parts that
$$\int^\infty_{\pi/\sqrt{w_k}} xp(x) \; dx=
\int^\infty_{\pi/\sqrt{w_k}} F(x) \;
dx+\frac{\pi}{\sqrt{w_k}}F\left(\frac{\pi}{\sqrt{w_k}}\right)$$ so that
$$\frac{\sqrt{w_k}}{\pi}\int^\infty_{\pi/\sqrt{w_k}} F_k(x) \; dx \leq
\mathbb{E}[\mathcal{N}_{0,k}] \leq \frac{\sqrt{w_k}}{\pi}\int^\infty_{\pi/\sqrt{w_k}} F_k(x) \;
dx+F\left(\frac{\pi}{\sqrt{w_k}}\right).$$ It gives immediately necessary and sufficient
conditions for the finiteness of the expectation.

\new Before considering the general case, we present a few examples to demonstrate how the borderline behavior
between finitely many and infinitely many negative eigenvalues depend on the decay of $W$ and the
bump separation lengths $L_k$.

\begin{example}  Let's consider the situation when $L_k\geq 0$ are distributed with exponential tails,
that is $P\{L_k > x\}=e^{-\eta x}$ for some $\eta>0$. If for large $k$ $w_k <\ds
\frac{\eta^2\pi^2}{\ln^2 k}$, then $\mathcal{N}_0(W)<\infty$ $P$-a.s. On the other hand, if for
large $k$, $w_k> \ds \frac{\eta^2\pi^2}{\ln^2 k}$, then $\mathcal{N}_0(W)=\infty$ $P$-a.s. It
shows that the borderline between $P$-a.s finiteness of $\mathcal{N}_0(W)$ goes along the line
$$w_k \sim \frac{\eta^2\pi^2}{\ln^2 k}.$$
\end{example}

\new

\begin{remark} Actually, we can say even more in the previous example.
Again, assuming that $P\{ L_k \geq x\}=e^{-\eta x}$, $\mathbb{E}[L_k]=1$ and $P( L_k \geq \pi
w^{-1/2}_k)=\exp\left(-\eta\pi w^{-1/2}_k\right)$. These exponential tails are given by the
density function $p(x)=\eta e^{-\eta x}$. Therefore by (\ref{expect})
\begin{equation}\ds \frac{\sqrt{w_k}}{\eta \pi}e^{-\eta\pi/\sqrt{w_k}}
\leq
\mathbb{E}[\mathcal{N}_{0,k}] \leq\left(\frac{\sqrt{w_k}}{\eta \pi}+1\right) e^{-\eta\pi/\sqrt{w_k}}
\label{expect2}
\end{equation} as $w_k \to 0$.
So, for small $\delta>0$ and $$w_k=\frac{\eta^2\pi^2(1-\delta)^2}{\ln^2 k}$$ it follows that
$\mathbb{E}[\mathcal{N}_{0,k}] =O\left(n^{-1-\beta}\right)$ where $1+\beta=(1-\delta)^{-1}$.
Consequently $\mathcal{N}_0(W)=\ds \sum_k \mathbb{E}[\mathcal{N}_{0,k}]<\infty$.  If instead
$$w_k=\frac{\eta^2 \pi^2 (1+\delta)^2}{\ln^2 k}$$ then by (\ref{expect2})
$$\mathbb{E}[\mathcal{N}_{0,k}] \geq \frac{C_\delta}{k^{1-\beta}\ln k}$$
with $1-\beta=(1+\delta)^{-1}$ and consequently $\mathbb{E}[\mathcal{N}_0(W)]=\infty$.
This shows that the decay in $W$ and the borderline between $\mathbb{E}[\mathcal{N}_0(W)]<\infty$
and $\mathbb{E}[\mathcal{N}_0(W)]=\infty$ is also along $$W(x) \sim \frac{\eta^2\pi^2}{\ln^2 x}.$$
\end{remark}

\begin{example} Let's consider the heavy tailed case where $P(L_k>x) \sim c/x^\alpha$ as $x\to
\infty$ with $\alpha>2$.  If for large $k$, $w_k<t^{-\beta}$ with $\beta>2/\alpha$, then
$\mathcal{N}_0(W)<\infty$ $P$-a.s. and $\mathbb{E}[\mathcal{N}_0(W)]<\infty$.  If on the other
hand, $\beta<2/\alpha$, then $\mathcal{N}_0(W)=\infty$ $P$-a.s. and
$\mathbb{E}[\mathcal{N}_0(W)]=\infty$. It shows that the borderline between $P$-a.s finiteness of
$\mathcal{N}_0(W)$ and the expectation $\mathbb{E}[\mathcal{N}_0(W)]$ goes along the line
$$w_k \sim \frac{1}{k^{2/(\alpha-1)}}.$$
\end{example}

\section{Description of the Model in the General Case}

\noindent One can split the real line $\mathbb{R}$ into two half axes, and introduce a Dirichlet
boundary condition at the point of splitting, and reduce the initial spectral problem on
$\mathbb{R}$ to the union of two similar problems on the half axes. Consider the 1-D
Schr\"{o}dinger operator $H_0\psi=-\psi''+V_\omega(x)\psi$ with boundary condition $\psi(0)=0$ for
$x \geq 0$. The random potential $V_\omega$ is defined by

$$V_\omega(x)=\ds \sum_{k \in \mathbb{Z}} h \mathbb{I}_k(x,\omega)$$
where $h>0$, $\mathbb{I}_k$ is the indicator of the interval $I_k=[x_k-l,x_k+l]$ and $x_k>0$ is a
random increasing sequence of real numbers. A typical realization of $V_\omega$ is given in Figure
1 below, and consists of bumps of constant height $h>0$ and width $2l>0$, separated by wells of
random length $L_k=x_{k+1}-x_k-2l$ where the potential $V_\omega=0$.  We will assume that $L_k$
are independent and identically distributed with some density $p(x)$ and finite expectation
$$\mathbb{E}[L_k]=\int^\infty_0 xp(x) \; dx<\infty$$ and that for all $t>0$, $P(L_k>t)>0$. Let's
note that the condition $\mathbb{E}[L_k]<\infty$ implies that $V(x,\omega)$ is stationary near the
point $\infty$.

\vfill
\centertexdraw{

    \move(-0.25 0) \avec(-0.25 1.25)
    \move(-0.25 0) \avec(4.5 0)

    \move(0.5 0.5) \lvec(1.0 0.5)
    \move(1.6 0.5) \lvec(2.1 0.5)
    \move(3.1 0.5) \lvec(3.6 0.5)

    \move(0.5 0) \lpatt(0.037 0.05) \lvec(0.5 0.5)
    \move(1.0 0) \lvec(1.0 0.5)
    \move(1.6 0) \lvec(1.6 0.5)
    \move(2.1 0) \lvec(2.1 0.5)
    \move(3.1 0) \lvec(3.1 0.5)
    \move(3.6 0) \lvec(3.6 0.5)
    \htext(0.7 0.55){$\scriptstyle h$}
    \htext(1.8 0.55){$\scriptstyle h$}
    \htext(3.3 0.55){$\scriptstyle h$}
    \htext(0.35 -0.15){$\scriptstyle x_1-l$}
    \htext(0.85 -0.15){$\scriptstyle x_1+l$}
    \htext(1.45 -0.15){$\scriptstyle x_2-l$}
    \htext(1.95 -0.15){$\scriptstyle x_2+l$}
    \htext(2.95 -0.15){$\scriptstyle x_3-l$}
    \htext(3.45 -0.15){$\scriptstyle x_3+l$}

    \htext(0.05 0.02){$\scriptstyle L_1$}
    \move(0.04 0.04) \linewd 0.01 \setgray 0.7 \arrowheadtype t:F \arrowheadsize l:0.05 w:0.03 \lpatt()\avec(-0.23 0.04)
    \move(0.18 0.04) \avec(0.5 0.04)

    \htext(1.23 0.02){$\scriptstyle L_2$}
    \move(1.22 0.04) \avec(1 0.04)
    \move(1.37 0.04) \avec(1.61 0.04)

    \htext(2.53 0.02){$\scriptstyle L_3$}
    \move(2.52 0.04) \avec(2.1 0.04)
    \move(2.67 0.04) \avec(3.11 0.04)

    \move(2 -0.6)\htext{Figure 1.}
}

\section{Dirichlet-Neumann Bracketing}

\noindent Consider the partition of $\mathbb{R}^+$ defined by the set $P=\{0,x_1,x_2,\ldots\}$.
Let $H_D$ be the operator on $L^2(\mathbb{R}^+)$ defined by $H_D\psi=-\psi''+V_\omega\psi$ with
Dirichlet boundary conditions at each point of $P$, that is, $\psi(x_k)=0$. Similarly, define the
operator $H_N$ by $H_N\psi=-\psi''+V_\omega\psi$ with Neumann boundary conditions at each point of
$P$, that is, $\psi'(x_k)=0$. Then, both $H_D$ and $H_N$ are direct sums of operators on finite
intervals $I_k=[x_k,x_{k+1}]$.  Denote by $\mathcal{N}_{0,D}(W)$, the number of negative
eigenvalues for the operator $\mathcal{H}_D \equiv H_D-W$, and by $\mathcal{N}_{0,N}(W)$, the
number of negative eigenvalues for the operator $\mathcal{H}_N \equiv H_N-W$ (we will also
occasionally use the notations $\mathcal{N}_0(\mathcal{H}_D)$ and $\mathcal{N}_0(\mathcal{H}_N)$
for the numbers of negative eigenvalues of $\mathcal{H}_D$ and $\mathcal{H}_N$). The following
classical fact due to H. Weyl (so called Dirichlet-Neumann bracketing, see \cite{ReedSim}) will be
central in our calculations.

\begin{lemma} Let $H=H_0-W$ and $\mathcal{N}_0(W)=\#\{\lambda_i<0\}$.  Then
$\mathcal{N}_{0,D}(W) \leq \mathcal{N}_{0}(W) \leq \mathcal{N}_{0,N}(W)$.
\end{lemma}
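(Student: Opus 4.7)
The plan is to realize all three operators as the form (Friedrichs) extension of the common quadratic form
$$q(\psi)=\int_0^\infty |\psi'(x)|^2\,dx+\int_0^\infty \bigl(V_\omega(x)-W(x)\bigr)|\psi(x)|^2\,dx,$$
changing only the form domain. For $\mathcal{H}=H_0-W$ the form domain is $D_0=\{\psi\in H^1(\mathbb{R}^+):\psi(0)=0\}$; for $\mathcal{H}_D=H_D-W$ the extra Dirichlet conditions shrink it to $D_D=\{\psi\in D_0:\psi(x_k)=0 \text{ for all } k\}$; and for $\mathcal{H}_N=H_N-W$ the Neumann conditions are not imposed on the form domain at all --- they arise as natural boundary conditions upon integration by parts --- so the form domain enlarges to $D_N=\bigoplus_{k}H^1(I_k)$ with only $\psi(0)=0$ retained, i.e.\ functions are allowed to jump at each partition point $x_k$. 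In particular $D_D\subset D_0\subset D_N$, with the integrand of $q$ the same in all three cases.

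The engine of the proof is the Glazman variational characterization: for any self-adjoint operator $A$ bounded below, with quadratic form $q_A$ on domain $D_A$,
$$\mathcal{N}_0(A)=\sup\bigl\{\dim L\,:\,L\subset D_A \text{ is a linear subspace with } q_A(\psi)<0 \text{ for every } \psi\in L\setminus\{0\}\bigr\}.$$
Because $q$ is given by the same formula on the nested domains $D_D\subset D_0\subset D_N$, any trial subspace witnessing negativity of $q$ in a smaller domain is automatically a trial subspace in the larger one, and the supremum is monotone increasing in the form domain. This immediately yields
$$\mathcal{N}_{0,D}(W)\le \mathcal{N}_0(W)\le \mathcal{N}_{0,N}(W).$$

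The one point that requires genuine verification, though it is classical, is the identification of the form domain for $H_N$: one must check via integration by parts that a variational critical point of $q$ over $D_N$ satisfies $\psi'(x_k^\pm)=0$ automatically, so that $H_N$ really decouples into a direct sum of Neumann--Neumann realizations on the intervals $I_k$ and thus agrees with the operator described in the text. Once the three form domains are correctly identified, no further analytic input is needed; the remainder is just set-theoretic monotonicity of the Glazman supremum. In particular, the bracketing holds \emph{pointwise} in the randomness $\omega$, which is exactly what is required for the subsequent probabilistic arguments.
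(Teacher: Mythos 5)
The paper offers no proof of this lemma at all: it is invoked as a classical fact of H.~Weyl with a citation to Reed--Simon. Your argument --- realizing $\mathcal{H}_D$, $\mathcal{H}$, $\mathcal{H}_N$ as operators associated with the same quadratic form on nested form domains $D_D\subset D_0\subset D_N$ and applying the Glazman variational characterization of the number of negative eigenvalues --- is precisely the standard proof found in that reference, so your proposal is correct and takes the canonical route rather than a novel one. One small inaccuracy worth noting: the paper imposes Neumann conditions at \emph{every} point of $P=\{0,x_1,x_2,\ldots\}$, including $x=0$, so the form domain of $H_N$ is all of $\bigoplus_k H^1(I_k)$ with no condition at the origin; your $D_N$, which retains $\psi(0)=0$, is the form domain of a mixed Dirichlet-at-$0$/Neumann-at-$x_k$ operator. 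This does not damage the proof, since your domain sits between $D_0$ and the true Neumann form domain, so the monotonicity of the Glazman supremum still delivers $\mathcal{N}_0(W)\le\mathcal{N}_{0,N}(W)$ --- but the identification of the three operators should be stated as the paper defines them.
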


\vskip0.4cm\noindent Let's take a closer look at the operators $\mathcal{H}_D$ and
$\mathcal{H}_N$. Since $H_D$ and $H_N$ are direct sums of operators on the intervals $I_k$, so are
$\mathcal{H}_D$ and $\mathcal{H}_N$. Denote these operators respectively by $\mathcal{H}_{D,k}$
and $\mathcal{H}_{N,k}$. Specifically, $\mathcal{H}_{D,k}$ and $\mathcal{H}_{N,k}$ are defined by
the same differential expression as $\mathcal{H}_D$ and $\mathcal{H}_N$, but with Dirichlet
boundary conditions at the edges of $I_k$ in the case of $\mathcal{H}_{D,k}$ and Neumann boundary
conditions in the case of $\mathcal{H}_{N,k}$. In this case, the numbers of negative eigenvalues
for $\mathcal{H}_D$ and $\mathcal{H}_N$ are the sum total of the numbers of negative eigenvalues
for the operators $\mathcal{H}_{D,k}$ and $\mathcal{H}_{N,k}$, that is,
$$\mathcal{N}_{0}(\mathcal{H}_D) = \sum_k \mathcal{N}_{0}(\mathcal{H}_{D,k}), \hskip0.5cm
\mathcal{N}_{0}(\mathcal{H}_N) = \sum_k \mathcal{N}_{0}(\mathcal{H}_{N,k}).$$ Now, the spectrum of
$\mathcal{H}_{D,k}$ and $\mathcal{H}_{N,k}$ are discrete for all $k$, with random eigenvalues
$\{\lambda^{(D)}_{n,k} \; : \; n \geq 0\}$ and $\{\lambda^{(N)}_{n,k} \; : \; n \geq 0\}$
accumulating only to $\infty$.

\vskip0.4cm\noindent The following definition will be important in what follows.  Two functions
$W^+(x)$ and $W^-(x)$ will be called upper and lower approximations for $W(x)$ if for some
$z_0>0$, $W^-(x) \leq W(x) \leq W^+(x)$ whenever $x \geq z_0$. We will construct them in the
following way.  Let us note that under the assumptions on $L_i$ we have
$x_k=L_1+\cdots+L_k+(2k-1)l$ for each $k$ and by the strong law of large numbers, $x_k/k \to
\alpha$ $P$-a.s. where $\alpha=\mathbb{E}[L_i]+2l.$ Therefore, given $\varepsilon>0$ there exists
$k_0=k_0(\varepsilon,\omega)$ such that when $k\geq k_0$, $(1-\varepsilon)\alpha k \leq x_k \leq
(1+\varepsilon) \alpha k$.  Now, for any $x \geq x_{k_0}$ there is some $k$ for which $x \in I_k$,
and since $W$ is monotonically decreasing,
$$W((1+\varepsilon)\alpha (k+1)) \leq W(x_{k+1}) \leq W(x) \leq
W(x_k) \leq W((1-\varepsilon) \alpha k).$$ So, for any $\varepsilon>0$, the functions
$W^+_\varepsilon$ and $W^-_\varepsilon$ defined as $W^+_\varepsilon(x) \equiv \sum^\infty_{k=1}
W((1-\varepsilon) \alpha k) \mathbb{I}_k(x)$ and $W^-_\varepsilon(x) \equiv \sum^\infty_{k=1}
W((1+\varepsilon) \alpha (k+1)) \mathbb{I}_k(x)$ are piecewise constant functions for which
$W^-_\varepsilon(x) \leq W(x) \leq W^+_\varepsilon(x)$ whenever $x \geq x_{k_0}$.  Thus,
$W^+_\varepsilon$ and $W^-_\varepsilon$ are upper and lower approximations for $W$.

\new By the Sturm comparison and oscillation theorems (see \cite{LevSarg}) it immediately follows that

\begin{equation} \mathcal{N}_{0,D,k}(W^-_\varepsilon) \leq
\mathcal{N}_{0,D,k}(W) \leq \mathcal{N}_{0,D,k}(W^+_\varepsilon)
\label{sturm1}
\end{equation}

\new when $k \geq k_0(\varepsilon,\omega).$ Further, since $W^-_\varepsilon, W^+_\varepsilon$ are
constant on intervals $I_k$, the spectra $\sigma ({H}_{\cdot,k}-W^-_\varepsilon(x))$ and $\sigma
({H}_{\cdot,k}-W^+_\varepsilon(x))$ can be determined exactly (see below).  We first consider a
simple example, the results of which will be used in the next section.

\begin{example} Let's consider the operators $h_N \psi=-\psi''+V(x)\psi$ and
$h_D \psi=-\psi''+V(x)\psi$ with Dirichlet boundary conditions at $-L-l$ and $L+l$ in the case of
$h_D$ and Neumann conditions in the case of $h_N$. Let $V$ be the symmetric function defined by
$V=h(\mathbb{I}_{[-L-l,-L]}+\mathbb{I}_{[L,L+l]})$: \vskip0.5cm

\noindent Both $h_N$ and $h_D$ have discrete spectrum bounded below by 0, with corresponding
eigenfunctions $\psi^{(D)}_i$ and $\psi^{(N)}_i$ for $i \geq 0$.  For even $i$, these
eigenfunctions are even, while odd for odd $i$. This is due to the fact that $V$ is symmetric. We
begin by estimating the lowest eigenvalue $\mu^{(D)}_0$ of $h_D$. Of course, the lowest eigenvalue
corresponds to the eigenfunction $\psi^{(D)}_0$. Set $\mu^{(D)}=k^2$. On the interval
$[-(L+l),-L]$, the eigenfunction $\psi^{(D)}(x)$ is given by
$$\psi^{(D)}(x)=A \sinh \sqrt{h-k^2}(x+L+l)$$ while on the interval $[-L,0]$,
$$\psi^{(D)}(x)=B\cos (kx).$$  The continuity conditions, $\psi^{(D)}(-L-0)=\psi^{(D)}(-L+0)$ and
$\psi'(-L-0)=\psi'(-L+0)$ immediately provide the eigenvalue equation,
\begin{equation}
k\tan(kL)=\sqrt{h-k^2}\coth(\sqrt{h-k^2}l). \label{evaleq}
\end{equation}
Looking ahead to the ``real" model, we solve the above equation for $L \to \infty$. Equation
(\ref{evaleq}) then implies $k_0 \to 0$ and that for some $\delta_0>0$
\begin{equation} k_0L=\frac{\pi}{2}-\delta_0 \label{evaleq2}
\end{equation}  The right side of (\ref{evaleq}) is analytic in $k$, and after expansion as a Taylor
series, we have

$$\tan(kL)=\frac{A_{0,D}}{k}+A_{1,D} k+O(k^3)$$

or equivalently
\begin{equation}
\cot(kL)=B_{0,D} k-B_{1,D}k^3+O(k^5) \label{evaleq3}
\end{equation}
where $A_{j,D}$ and $B_{j,D}$ are constants depending only on $h$ and $l$. In particular,
$B_{0,D}=1/A_{0,D}$ and $B_{1,D}=A_{1,D}/A_{0,D}$ where $A_{0,D}=\sqrt{h}\coth(\sqrt{h}l)$ and
$A_{1,D}>0$. Substitution of (\ref{evaleq2}) into (\ref{evaleq3}) gives
$$\tan(\delta_0)=B_{0,D} k_0-B_{1,D} k^3_0+O(k^5_0)$$ which for $\delta_0\to 0$ implies
$$\delta_0=B_{0,D} k_0-B_{1,D}k^3_0+O(k^4_0)$$ or in terms of $k_0$

\begin{equation} k_0=\frac{\pi}{2L}-\frac{1}{L}\left(B_{0,D} k_0-B_{1,D}k^3_0
\right)+O(k^4_0). \label{evaleq4}\end{equation} Iterating (\ref{evaleq4}) in $k_0$ gives

\begin{equation} \sqrt{\mu^{(D)}_0}=\frac{\pi}{2L}\left(1-\frac{B_{0,D}}{L}\right)
+O\left(\frac{1}{L^3}\right) \label{evaleq6}
\end{equation}

Similar steps give also the lowest eigenvalue for $h_N$.  In
fact, the eigenvalue equation for $h_N$ is
\begin{equation} k\tan(kL)=\sqrt{h-k^2}\tanh(\sqrt{h-k^2}l) \label{evaleq8}\end{equation} and gives
the lowest eigenvalue
\begin{equation} \sqrt{\mu^{(N)}_0}=\frac{\pi}{2L}\left(1-\frac{B_{0,N}}{L}\right)
+O\left(\frac{1}{L^3}\right) \label{evaleq7}
\end{equation} where $B_{0,N}$ is a positive constant depending only on $h$ and $l$.
\end{example}

\begin{remark}  It is interesting to note that the lowest eigenvalue for $h_{N}$ is asymptotically
equivalent to the lowest eigenvalue for $h_{D}$.
\end{remark}

\section{The Central Theorems}

\new The following results will be important in the study of the borderline decay of $W(x)$.

\new Given $\varepsilon>0$, set $w^+_{k,\varepsilon}=W((1- \varepsilon)\alpha k)$, $w^-_{k,\varepsilon}
=W((1+\varepsilon)\alpha k)$, and define the upper and lower approximations
$W^+_\varepsilon(x)=\sum^\infty_{k=1} w^+_{k,\varepsilon} \mathbb{I}_k(x)$ and
$W^-_\varepsilon(x)=\sum^\infty_{k=1} w^-_{k,\varepsilon} \mathbb{I}_k(x)$ Where $\mathbb{I}_k$
are the indicator functions of the intervals $I_k=[x_k,x_{k+1}]$.  Define the operators
$H_{D,k}=-d^2/dx^2+V_k(x)$ with Dirichlet boundary conditions at $x_k$ and $x_{k+1}$ and
$H_{N,k}=-d^2/dx^2+V_k(x)$ with Neumann boundary conditions at $x_k$ and $x_{k+1}$. Here
$V_k=\mathbb{I}_{[x_k,x_k+l]}+\mathbb{I}_{[x_{k+1}-l,x_{k+1}]}$. Set
$\mathcal{H}_{D,k,\varepsilon}=H_{D,k}-W^-_\varepsilon \mathbb{I}_k$ and
$\mathcal{H}_{N,k,\varepsilon}=H_{N,k}-W^+_\varepsilon \mathbb{I}_k$. Finally $\mu^{(D)}_{0,k}$
and $\mu^{(N)}_{0,k}$ denote respectively, the lowest eigenvalues of $H_{D,k}$ and $H_{N,k}$.

\begin{theorem} \label{mainthm1} Let $\varepsilon>0$ be given.   If
\begin{equation}\sum^\infty_{k=1} P\left(\mu^{(N)}_{0,k} <
w^+_{k,\varepsilon}\right)<\infty \label{mainthm1inequality}
\end{equation} then $P$-a.s. $\mathcal{N}_0(W)<\infty.$
\end{theorem}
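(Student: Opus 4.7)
The plan is to reduce the problem, via the Dirichlet--Neumann bracketing lemma and the Sturm comparison estimate~(\ref{sturm1}), to a single Borel--Cantelli argument involving only the bottom Neumann eigenvalues $\mu^{(N)}_{0,k}$.

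First I would apply the Dirichlet--Neumann bracketing lemma together with the block-diagonal structure of $\mathcal{H}_N$ to obtain
$$\mathcal{N}_0(W)\;\leq\;\mathcal{N}_{0,N}(W)\;=\;\sum_{k\geq 1}\mathcal{N}_{0,N,k}(W).$$
The domination $W\leq W^+_\varepsilon$ on $[x_{k_0(\varepsilon,\omega)},\infty)$ gives the quadratic form inequality $H_{N,k}-W^+_\varepsilon\leq H_{N,k}-W$ on $I_k$ for each $k\geq k_0$, and the min--max principle (the Neumann analogue of~(\ref{sturm1})) then yields $\mathcal{N}_{0,N,k}(W)\leq\mathcal{N}_0(\mathcal{H}_{N,k,\varepsilon})$ for all $k\geq k_0$. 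Since $k_0$ is $P$-a.s.\ finite and each Neumann Schr\"odinger operator on the bounded interval $I_k$ has only finitely many negative eigenvalues, the head $\sum_{k<k_0}\mathcal{N}_{0,N,k}(W)$ is $P$-a.s.\ finite, so the whole matter reduces to showing $\sum_{k\geq 1}\mathcal{N}_0(\mathcal{H}_{N,k,\varepsilon})<\infty$ $P$-a.s.

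The structural observation behind the statement is that $W^+_\varepsilon$ equals the \emph{single constant} $w^+_{k,\varepsilon}$ on all of $I_k$, so $\mathcal{H}_{N,k,\varepsilon}$ is simply the scalar shift $H_{N,k}-w^+_{k,\varepsilon}$. Its eigenvalues are therefore exactly $\mu^{(N)}_{i,k}-w^+_{k,\varepsilon}$ for $i\geq 0$, which gives the clean equivalence
$$\{\mathcal{N}_0(\mathcal{H}_{N,k,\varepsilon})\geq 1\}\;=\;\{\mu^{(N)}_{0,k}<w^+_{k,\varepsilon}\},$$
and each individual count $\mathcal{N}_0(\mathcal{H}_{N,k,\varepsilon})$ is finite because $I_k$ is bounded. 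Now by hypothesis~(\ref{mainthm1inequality}) the events $A_k=\{\mu^{(N)}_{0,k}<w^+_{k,\varepsilon}\}$ satisfy $\sum_k P(A_k)<\infty$, so the first Borel--Cantelli lemma gives $P(A_k\text{ i.o.})=0$. On the full-measure event where only finitely many $A_k$ occur, $\mathcal{N}_0(\mathcal{H}_{N,k,\varepsilon})=0$ for all but finitely many $k$ and the surviving finitely many terms are each finite; combined with the finite head contribution this gives $\mathcal{N}_0(W)<\infty$ $P$-a.s.

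The delicate step I expect is the operator-comparison stage: since the domination $W\leq W^+_\varepsilon$ only holds past $x_{k_0}$, the bound must be set up block-by-block using the Neumann variational principle rather than appealing globally to~(\ref{sturm1}); the rest of the argument is a clean assembly of the Dirichlet--Neumann bracketing lemma, the scalar-shift identity for $\mathcal{H}_{N,k,\varepsilon}$, and the first Borel--Cantelli lemma.
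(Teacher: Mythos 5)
Your proposal is correct and takes essentially the same route as the paper's own proof: Dirichlet--Neumann bracketing, the pointwise domination $W\leq W^+_\varepsilon$ on the blocks $I_k$ for $k\geq k_0(\omega,\varepsilon)$, the scalar-shift observation that the lowest eigenvalue of $H_{N,k}-W^+_\varepsilon\mathbb{I}_k$ is $\mu^{(N)}_{0,k}-w^+_{k,\varepsilon}$, and the first Borel--Cantelli lemma to kill all but finitely many blocks. The only cosmetic difference is that you justify the block-by-block comparison via the min--max principle while the paper cites the Sturm comparison and oscillation theorems; the two are interchangeable here, and your explicit handling of the finite head $k<k_0$ merely spells out what the paper leaves implicit.
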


\begin{proof} By the strong law of large numbers, there is some $k_0=k_0(\omega,\varepsilon)$ so that
if $k \geq k_0$ and $x\in I_k$, then $(1-\varepsilon) \alpha k \leq x$ in which case $W(x) \leq
W((1-\varepsilon)\alpha k)$.  For such $k$, the Sturm oscillation and comparison theorems imply
that $\mathcal{N}_{0,k,N}(W) \leq \mathcal{N}_{0,k,N}(W^+_{\varepsilon}\mathbb{I}_k)$, i.e., the
number of negative eigenvalues for the operator $H_{N,k}-W$ is not greater than the number of
negative eigenvalues of $H_{N,k}-W^+_\varepsilon \mathbb{I}_k$.  The lowest eigenvalue for
$H_{N,k}-W^+_\varepsilon \mathbb{I}_k$ is $\mu^{(N)}_{0,k}-w^+_{k,\varepsilon}$ since
$W^+_\varepsilon(x)=w^+_{k,\varepsilon}$ is constant for all $x \in I_k$.  If
(\ref{mainthm1inequality}) holds, then the Borel-Cantelli lemma implies that
$\mu^{(N)}_{0,k}-w^+_{k,\varepsilon}\geq 0$ $P$-a.s. for $k \geq k_1(\omega,\varepsilon) \geq
k_0$. It means $\mathcal{N}_{0,k,N}(W^+_\varepsilon \mathbb{I}_k)=0$ and consequently
$\mathcal{N}_{0,k,N}(W)=0$ for all $k \geq k_1$. Dirichlet-Neumann bracketing implies
$\mathcal{N}_0(W) \leq \mathcal{N}_{0,N}(W)=\sum \mathcal{N}_{0,k,N}(W)<\infty.$
\end{proof}

\new Independence of the events $\{\mu^{(D)}_{0,k} <
w^-_{k,\varepsilon}\}$ and similar ideas lead to the following theorem:

\begin{theorem} Let $\varepsilon>0$ be given.   If
$$\sum^\infty_{k=1} P\left(\mu^{(D)}_{0,k} <
w^-_{k,\varepsilon}\right)=\infty$$ then $P$-a.s. $\mathcal{N}_0(W)=\infty.$
\end{theorem}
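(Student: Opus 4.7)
The plan is to mirror the proof of Theorem \ref{mainthm1}, but using the Dirichlet side of the bracketing to get a lower bound on $\mathcal{N}_0(W)$, and invoking the \emph{second} Borel--Cantelli lemma (which requires independence of the relevant events) in place of the first.

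First, I would apply the strong law of large numbers in the same way as before: there is $k_0 = k_0(\omega, \varepsilon)$ such that $x \leq (1+\varepsilon)\alpha k$ for every $x \in I_k$ with $k \geq k_0$. Since $W$ is monotonically decreasing, this gives $W(x) \geq W((1+\varepsilon)\alpha k) = w^-_{k,\varepsilon}$ on $I_k$, so that $W \geq W^-_\varepsilon \mathbb{I}_k$ on $I_k$. The Sturm comparison and oscillation theorems then yield
\[
\mathcal{N}_{0,k,D}(W) \;\geq\; \mathcal{N}_{0,k,D}\bigl(W^-_\varepsilon \mathbb{I}_k\bigr)
\]
for every $k \geq k_0$. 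Because $W^-_\varepsilon$ takes the constant value $w^-_{k,\varepsilon}$ on $I_k$, the lowest eigenvalue of $H_{D,k} - W^-_\varepsilon \mathbb{I}_k$ is exactly $\mu^{(D)}_{0,k} - w^-_{k,\varepsilon}$. Consequently, on the event $\{\mu^{(D)}_{0,k} < w^-_{k,\varepsilon}\}$, the operator $\mathcal{H}_{D,k,\varepsilon}$ has at least one negative eigenvalue, and therefore $\mathcal{N}_{0,k,D}(W) \geq 1$.

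Next I would verify the independence of the events $A_k := \{\mu^{(D)}_{0,k} < w^-_{k,\varepsilon}\}$. This is the key structural point: the operator $H_{D,k}$ acts on $L^2(I_k)$ with Dirichlet conditions at $x_k$ and $x_{k+1}$, and its potential $V_k$ consists only of two bumps of fixed height $h$ and width $l$ at the ends of $I_k$ with a well of length $L_k$ between them. Hence $\mu^{(D)}_{0,k}$ is a deterministic function of $L_k$ alone, and since the $L_k$ are i.i.d., the events $A_k$ are independent. The thresholds $w^-_{k,\varepsilon}$ are deterministic, so independence survives. Given $\sum_k P(A_k) = \infty$, the second Borel--Cantelli lemma gives $P(A_k \text{ i.o.}) = 1$.

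Combining these steps, $P$-a.s.\ there are infinitely many $k \geq k_0(\omega,\varepsilon)$ with $\mathcal{N}_{0,k,D}(W) \geq 1$, hence
\[
\mathcal{N}_{0,D}(W) \;=\; \sum_k \mathcal{N}_{0,k,D}(W) \;=\; \infty \quad P\text{-a.s.}
\]
Dirichlet--Neumann bracketing ($\mathcal{N}_{0,D}(W) \leq \mathcal{N}_0(W)$) then yields $\mathcal{N}_0(W) = \infty$ $P$-a.s., as required. The only real subtlety is the independence claim, and once one observes that $\mu^{(D)}_{0,k}$ depends on the random data only through $L_k$, this obstacle dissolves; the rest is a straightforward dual of Theorem \ref{mainthm1}.
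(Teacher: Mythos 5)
Your proof is correct and follows exactly the route the paper intends: the paper leaves this theorem with only the remark that ``independence of the events $\{\mu^{(D)}_{0,k} < w^-_{k,\varepsilon}\}$ and similar ideas'' (i.e., mirroring Theorem \ref{mainthm1} on the Dirichlet side) yield the result, and your writeup supplies precisely that argument --- SLLN plus monotonicity of $W$, Sturm comparison to get $\mathcal{N}_{0,k,D}(W) \geq 1$ on $\{\mu^{(D)}_{0,k} < w^-_{k,\varepsilon}\}$, the second Borel--Cantelli lemma justified by the key observation that $\mu^{(D)}_{0,k}$ is (up to translation) a deterministic function of $L_k$ alone, and Dirichlet--Neumann bracketing to conclude. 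Your identification of the independence claim as the one genuine structural point matches the paper's emphasis, so there is nothing to correct.
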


\new In connection with the above theorems, we have the following lemmas:

\begin{lemma} \label{mainlemma1} For any $\varepsilon>0$ let $w^+_{k,\varepsilon}=
W((1-\varepsilon)\alpha k)$.  There exists a positive constant $c_N$ depending only on $l$ and $h$
such that $\mathcal{N}_0(W)<\infty$ $P$-a.s. whenever
\begin{equation}\sum^\infty_{k=1} P\left(L_k >
\pi/\sqrt{w^+_{k,\varepsilon}}-c_N\right)<\infty \label{mainlem1eqn}
\end{equation}
\end{lemma}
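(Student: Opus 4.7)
The plan is to deduce Lemma \ref{mainlemma1} directly from Theorem \ref{mainthm1} by reducing the spectral event $\{\mu^{(N)}_{0,k}<w^+_{k,\varepsilon}\}$ to an event that depends only on the well length $L_k$.

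First, I would observe that the operator $H_{N,k}$ on $I_k=[x_k,x_{k+1}]$ is unitarily equivalent (via translation) to the operator $h_N$ of Example 3.2 with parameter $L=L_k/2$, because the potential $V_k$ consists of two bumps of width $l$ at the ends of an interval of length $L_k+2l$, symmetric about its midpoint. Hence $\mu^{(N)}_{0,k}=f(L_k)$ for a smooth strictly decreasing function $f\colon(0,\infty)\to(0,\infty)$ that depends only on $h$ and $l$, and substituting $L=L_k/2$ into (\ref{evaleq7}) produces
$$\sqrt{f(L_k)}=\frac{\pi}{L_k}-\frac{2\pi B_{0,N}}{L_k^{2}}+O\!\left(L_k^{-3}\right),\qquad L_k\to\infty.$$

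Next, monotonicity of $f$ converts the spectral event into a geometric one:
$$\{\mu^{(N)}_{0,k}<w^+_{k,\varepsilon}\}=\{L_k>f^{-1}(w^+_{k,\varepsilon})\}.$$
A standard asymptotic inversion of the expansion above (write $L=\pi/s+a+O(s)$ and match coefficients) gives
$$f^{-1}(s^{2})=\frac{\pi}{s}-2B_{0,N}+O(s)\qquad(s\to 0^{+}).$$
I would then fix any constant $c_N>2B_{0,N}$ depending only on $h$ and $l$. Since $w^+_{k,\varepsilon}=W((1-\varepsilon)\alpha k)\to 0$ as $k\to\infty$, for all sufficiently large $k$
$$f^{-1}(w^+_{k,\varepsilon})\geq\frac{\pi}{\sqrt{w^+_{k,\varepsilon}}}-c_N,$$
and therefore
$$P\!\left(\mu^{(N)}_{0,k}<w^+_{k,\varepsilon}\right)\leq P\!\left(L_k>\frac{\pi}{\sqrt{w^+_{k,\varepsilon}}}-c_N\right).$$
Under the hypothesis (\ref{mainlem1eqn}) the right-hand side is summable, so condition (\ref{mainthm1inequality}) is verified and Theorem \ref{mainthm1} yields $\mathcal{N}_0(W)<\infty$ $P$-a.s.

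The main obstacle I anticipate is the asymptotic inversion: one has to confirm that the $O(L_k^{-3})$ remainder in the expansion for $\sqrt{f(L)}$ is uniform enough in $L$ that after inversion the error in $f^{-1}(s^{2})$ is genuinely $O(s)$ as $s\to 0^+$, so that the single constant $c_N$ works for all large $k$ simultaneously. This amounts to a quantitative version of the iteration already performed in Example 3.2 on the Neumann eigenvalue equation (\ref{evaleq8}), and it rests on the analyticity of $k\mapsto\sqrt{h-k^{2}}\tanh(\sqrt{h-k^{2}}\,l)$ near $k=0$; this is precisely what forces $c_N$ to depend on $h$ and $l$.
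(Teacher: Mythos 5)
Your proposal is correct, and it rests on the same computational core as the paper --- the Neumann asymptotics (\ref{evaleq7}) from Example 3.2 applied with $L=L_k/2$, with the second-order coefficient absorbed into a constant $c_N$ depending only on $h$ and $l$ --- but the logical organization is genuinely different. The paper argues by contraposition: assuming $\mathcal{N}_0(W)=\infty$ $P$-a.s., it deduces from Theorem \ref{mainthm1} that $\sum_k P(\mu^{(N)}_{0,k}<w^+_{k,\varepsilon})=\infty$, invokes the \emph{second} Borel--Cantelli lemma (hence the independence of the $\mu^{(N)}_{0,k}$, i.e.\ of the $L_k$) to produce $P$-a.s.\ a random sequence $k\to\infty$ along which $\mu^{(N)}_{0,k}<w^+_{k,\varepsilon}$, and only along that sequence extracts $L_k\to\infty$ and the pathwise inequality $L_k>\pi/\sqrt{w^+_{k,\varepsilon}}-c_N$, contradicting (\ref{mainlem1eqn}). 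You instead establish, for all sufficiently large $k$, the deterministic event inclusion $\{\mu^{(N)}_{0,k}<w^+_{k,\varepsilon}\}\subseteq\{L_k>\pi/\sqrt{w^+_{k,\varepsilon}}-c_N\}$ by inverting the relation $\mu^{(N)}_{0,k}=f(L_k)$, and feed the resulting bound $P(\mu^{(N)}_{0,k}<w^+_{k,\varepsilon})\leq P(L_k>\pi/\sqrt{w^+_{k,\varepsilon}}-c_N)$ directly into Theorem \ref{mainthm1}. Your route buys two things: it does not use independence at all (only the first Borel--Cantelli lemma hidden in Theorem \ref{mainthm1}), and it tightens a looseness in the paper's last line, which asserts the \emph{equality} $P(\mu^{(N)}_{0,k}<w^+_{k,\varepsilon})=P(L_k>\pi/\sqrt{w^+_{k,\varepsilon}}-c_N)$ where only a one-sided inclusion is justified (and is all that is needed). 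One refinement: strict monotonicity of $f$, which you assert but do not prove, is more than your argument requires; it suffices that $\inf_{0<L\leq M}f(L)>0$ for every $M$ (clear from positivity of the ground-state energy and continuity in $L$, via the eigenvalue equation (\ref{evaleq8})), so that $w^+_{k,\varepsilon}\to 0$ forces $L_k$ to be large on the event in question, after which the asymptotic inversion --- whose uniformity you rightly flag as the main analytic point --- yields the inclusion. This is in effect how the paper itself passes from small eigenvalue to large $L_k$, so you may simply substitute that coercivity observation for the monotonicity claim.
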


\begin{proof} Suppose that $P$-a.s $N_0(W)=\infty$. Then, by Theorem (\ref{mainthm1}) $$\sum_k
P(\mu^{(N)}_{0,k} < w^+_{k,\varepsilon})=\infty.$$ Since $\mu^{(N)}_{0,k}$ are independent random
variables the second Borel-Cantelli lemma implies that $P$-a.s, there is a collection of intervals
$I_k$ with $k \to \infty$ on which $\sqrt{\mu^{(N)}_{0,k}} < \sqrt{w^+_{k,\varepsilon}}$.
Therefore, $\sqrt{\mu^{(N)}_{0,k}} \to 0$ and by (\ref{evaleq8}) it follows that $L_k \to \infty$.
In this case by the asymptotic formula (\ref{evaleq7})
$$\frac{\pi}{L_k}\left(1-\frac{B_{0,N}}{2L_k}\right) < \sqrt{w^+_{k,\varepsilon}}$$ for all $k$
sufficiently large. Recall that $B_{0,N}$ is a positive constant depending only on $l$ and $h$.
This inequality and the fact that $L_k \to \infty$ immediately give $$L_k
\sqrt{w^+_{k,\varepsilon}} \geq \frac{1}{2}$$ for $k>>1$. Consequently, $$L_k >
\frac{\pi}{\sqrt{w^+_{k,\varepsilon}}}-\frac{\pi B_{0,N}}{2\sqrt{w^+_{k,\varepsilon}}L_k} \geq
\frac{\pi}{\sqrt{w^+_{k,\varepsilon}}}-c_N$$ for sufficiently large $k$ where $c_N$ is a positive
constant depending only on $l$ and $h$.  In this case $$P(\mu^{(N)}_{0,k} <
w^+_{k,\varepsilon})=P\left(L_k> \pi/\sqrt{w^+_{k,\varepsilon}}-c_N\right)$$ so that $\ds \sum_k
P\left(L_k> \pi/\sqrt{w^+_{k,\varepsilon}}-c_N\right)=\infty$.
\end{proof}
\new Similar steps lead to the proof of the following lemma:

\begin{lemma} \label{mainlemma2} For any $\varepsilon>0$ let
$w^-_{k,\varepsilon}=W((1+\varepsilon)\alpha k)$.  There exists a positive constant $c_D$
depending only on $l$ and $h$ such that $\mathcal{N}_0(W)=\infty$ $P$-a.s. whenever
$$\sum^\infty_{k=1} P\left(L_k > \pi/\sqrt{w^-_{k,\varepsilon}}-c_D\right)=\infty$$
\end{lemma}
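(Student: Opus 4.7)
The plan is to mirror the proof of Lemma \ref{mainlemma1}, but working with the Dirichlet eigenvalues $\mu^{(D)}_{0,k}$, the lower approximation $W^-_{\varepsilon}$, and Theorem 4.2 in place of Theorem \ref{mainthm1}. Rather than arguing by contradiction, I would proceed directly: given the divergence hypothesis on $\sum_k P\bigl(L_k > \pi/\sqrt{w^-_{k,\varepsilon}} - c_D\bigr)$, I would exhibit a positive constant $c_D$, depending only on $h$ and $l$, for which
\[
\bigl\{ L_k > \pi/\sqrt{w^-_{k,\varepsilon}} - c_D \bigr\} \subseteq \bigl\{ \mu^{(D)}_{0,k} < w^-_{k,\varepsilon} \bigr\}
\]
for all sufficiently large $k$. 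The inclusion yields $P(\mu^{(D)}_{0,k} < w^-_{k,\varepsilon}) \geq P(L_k > \pi/\sqrt{w^-_{k,\varepsilon}} - c_D)$ for all such $k$, so the hypothesis propagates to $\sum_k P(\mu^{(D)}_{0,k} < w^-_{k,\varepsilon}) = \infty$, and Theorem 4.2 delivers $\mathcal{N}_0(W) = \infty$ $P$-a.s.

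The inclusion itself is extracted from the Dirichlet asymptotic (\ref{evaleq6}). Since $H_{D,k}$ is, up to translation, the operator $h_D$ of the preceding example with well half-length $L_k/2$, that expansion takes the form
\[
\sqrt{\mu^{(D)}_{0,k}} = \frac{\pi}{L_k}\Bigl(1 - \frac{B_D}{L_k}\Bigr) + O\!\left(\frac{1}{L_k^{3}}\right),
\]
where $B_D > 0$ depends only on $h$ and $l$. Set $y = \sqrt{w^-_{k,\varepsilon}}$; since $W(x) \to 0$ we have $y \to 0$ as $k \to \infty$, and on the event $\{L_k > \pi/y - c_D\}$ the bound $L_k > \pi/y - c_D$ forces $L_k \to \infty$, so the expansion applies.

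Verification of the inclusion naturally splits into two cases. In the easy case $L_k \geq \pi/y$, one has $\pi/L_k \leq y$, and the negative correction in the asymptotic yields $\sqrt{\mu^{(D)}_{0,k}} < \pi/L_k \leq y$ for $k$ large enough. In the boundary case $\pi/y - c_D < L_k < \pi/y$, the fact that $L_k y \in (\pi - c_D y, \pi)$ combined with a Taylor expansion in $y$ gives
\[
\sqrt{\mu^{(D)}_{0,k}} \leq y + \frac{(c_D - B_D)\,y^{2}}{\pi} + O(y^{3}),
\]
which is strictly less than $y$ for any $0 < c_D < B_D$ and $k$ large enough. The main, and essentially only, technical point is this boundary case: one must verify that the negative $-\pi B_D/L_k^{2}$ correction in the asymptotic genuinely overwhelms the small excess $\pi/L_k - y$, and this is what forces $c_D < B_D$ and produces a constant $c_D$ depending only on $h$ and $l$.
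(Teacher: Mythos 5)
Your proposal is correct and takes essentially the route the paper intends for this lemma, which it leaves as ``similar steps'' to Lemma \ref{mainlemma1}: translate the tail event on $L_k$ into the spectral event $\{\mu^{(D)}_{0,k} < w^-_{k,\varepsilon}\}$ via the Dirichlet expansion (\ref{evaleq6}) applied with well length $L_k$, and feed the divergent sum into Theorem 4.2. Your direct event-inclusion organization, together with the correct observation that the inclusion here runs in the opposite direction from Lemma \ref{mainlemma1} and therefore forces $c_D$ \emph{small} (namely $c_D < B_D$) rather than large, is a faithful --- if anything slightly cleaner --- rendering of the contradiction argument the paper uses for the Neumann counterpart.
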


\begin{remark} It follows that Theorems 1.3 and 1.4 presented in the introduction are a consequence
of Lemmas 4.1 and 4.2 above. In fact, if $P(L_k>x)\sim \exp(-\eta x^\alpha/\alpha)$, then for
$c=c_D$ or $c=c_N$, $P\left(L_k > \pi/\sqrt{w^-_{k,\varepsilon}}-c\right) \sim CP\left(L_k >
\pi/\sqrt{w^-_{k,\varepsilon}}\right)$ for some positive constant $C$.  The borderline for
convergence or divergence of the sums $\sum P\left(L_k> \pi/\sqrt{w^-_{k,\varepsilon}}-c\right)$
is then achieved by taking $W(x) \sim C_{\eta,\alpha}/\ln^{2/\alpha} x$ where
$C_{\eta,\alpha}=(\eta \alpha^{-1})^{2/\alpha} \pi^2$.  In a similar way, if $P(L_k>x) \sim
x^{-\alpha}$, then $P\left(L_k > \pi/\sqrt{w^-_{k,\varepsilon}}-c\right) \sim CP\left(L_k >
\pi/\sqrt{w^-_{k,\varepsilon}}\right)$ for some positive constant $C$ and the borderline between
convergence or divergence is achieved if $W(x) \sim x^{-2/\alpha}$.
\end{remark}

\section{Examples}

\new Our goal in this section is the proof of the following theorem.

\begin{theorem} \label{mainexample1} Let $\{L_k\}$ be a sequence of independent, identically
distributed non-negative random variables having exponential tails, that is, $$P\{L_k>x\} \sim
e^{-\eta x}$$ for some $\eta>0$.  Let $c_0=\eta^2\pi^2$.  If $W(x) < \ds c_0/\ln^2 x$ for all
$x>>1$, then $\mathcal{N}_0(W)<\infty$ $P$-a.s. If $W(x) > \ds c_0/\ln^2 x$ for $x>>1$ then
$\mathcal{N}_0(W)=\infty$ $P$-a.s.
\end{theorem}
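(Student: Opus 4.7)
The plan is to apply Lemmas \ref{mainlemma1} and \ref{mainlemma2} directly: the exponential tail hypothesis $P(L_k>x)\sim e^{-\eta x}$ converts the abstract series appearing in those lemmas into explicit power series in $k$, and the borderline constant $c_0=\eta^2\pi^2$ arises precisely because it is calibrated so that $\eta\pi/\sqrt{c_0}=1$. For the statement to yield a genuine dichotomy I would read the strict inequality $W(x)<c_0/\ln^2 x$ in its sharp form, namely that there is a $\gamma>0$ with $W(x)\le (c_0-\gamma)/\ln^2 x$ for all large $x$, and symmetrically for the reverse inequality; this is the same reading used implicitly in Theorem \ref{exponentialtails} and Theorem \ref{lighttails}.

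For the finiteness half, fix $\varepsilon>0$ small. Then $w^+_{k,\varepsilon}=W((1-\varepsilon)\alpha k)\le (c_0-\gamma)/\ln^2((1-\varepsilon)\alpha k)$, and since $c_0=\eta^2\pi^2$ a direct computation gives
$$\frac{\eta\pi}{\sqrt{w^+_{k,\varepsilon}}}\;\ge\;\rho\,\ln\bigl((1-\varepsilon)\alpha k\bigr),\qquad \rho:=\frac{1}{\sqrt{1-\gamma/c_0}}>1.$$
Combining this with the exponential tail and the additive constant $c_N$ coming from Lemma \ref{mainlemma1},
$$P\bigl(L_k>\pi/\sqrt{w^+_{k,\varepsilon}}-c_N\bigr)\;\sim\;e^{\eta c_N}\,e^{-\eta\pi/\sqrt{w^+_{k,\varepsilon}}}\;\le\;C\,k^{-\rho},$$
which is summable. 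Lemma \ref{mainlemma1} then yields $\mathcal{N}_0(W)<\infty$ $P$-a.s.

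The infiniteness half is completely symmetric: $W(x)\ge (c_0+\gamma)/\ln^2 x$ gives $\eta\pi/\sqrt{w^-_{k,\varepsilon}}\le \rho'\ln((1+\varepsilon)\alpha k)$ with $\rho':=1/\sqrt{1+\gamma/c_0}<1$, so
$$P\bigl(L_k>\pi/\sqrt{w^-_{k,\varepsilon}}-c_D\bigr)\;\ge\;C'\,k^{-\rho'},$$
a divergent series. Lemma \ref{mainlemma2}, whose proof relies on the second Borel--Cantelli lemma (and hence on the independence of the $L_k$, which is built into the hypothesis), then gives $\mathcal{N}_0(W)=\infty$ $P$-a.s.

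The main difficulty is bookkeeping rather than a new idea: one has to verify that the parameters $\varepsilon$, $\gamma$, the multiplicative factor $((1\mp\varepsilon)\alpha)^{\mp\rho}$, the constant $e^{\eta c}$ absorbed into $C$, and the passage from ``$\sim$'' to genuine one-sided bounds all enter as harmless positive constants that do not displace the crossover. The essential observation is that the power of $k$ produced by $e^{-\eta\pi/\sqrt{W((1\mp\varepsilon)\alpha k)}}$ crosses $1$ precisely when $W(x)\ln^2 x$ crosses $c_0$, which pins down the borderline value as $c_0=\eta^2\pi^2$ and explains why no other constant would work.
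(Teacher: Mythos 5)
Your proposal is correct and takes essentially the same route as the paper's own proof: both halves reduce directly to Lemmas \ref{mainlemma1} and \ref{mainlemma2}, with the exponential tail converting $P\left(L_k>\pi/\sqrt{w^{\pm}_{k,\varepsilon}}-c\right)$ into a power of $k$ whose exponent crosses $1$ exactly at $c_0=\eta^2\pi^2$. Your explicit reading of the strict inequalities as a uniform gap $(c_0\mp\gamma)/\ln^2 x$ matches the paper's implicit use of $c_0(1\mp\delta)$, and your exponents $(1\mp\gamma/c_0)^{-1/2}$ are in fact the correct ones (the paper writes $(1\mp\delta)^{-1}$, a harmless slip since only the comparison with $1$ matters).
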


\begin{proof}  Suppose $W(x)\leq (c_0(1-\delta))/\ln^2 x$ for some $0<\delta<1$ and $x>>1$.
Then $$P\{L_k > \pi/\sqrt{w^+_{k,\varepsilon}}-c_N\} \sim \frac{c}{k^{1+\beta}}$$ where
$c=c(\eta,\varepsilon,\delta)$ and $1+\beta=(1-\delta)^{-1}$.  Consequently,
$$\sum^\infty_{k=1} P\{L_k>\pi/\sqrt{w^+_{k,\varepsilon}}-c_N\}<\infty$$ and therefore by
Lemma \ref{mainlemma1}, $\mathcal{N}_0(W)<\infty$ $P$-a.s.  On the other hand, if $W(x) \geq
(c_0(1+\delta))/\ln^2 x$ for $x>>1$, then $$P\{L_k > \pi/\sqrt{w^-_{k,\varepsilon}}-c_D\} \geq
\frac{c}{k^{1-\beta}}$$ where $1-\beta=(1+\delta)^{-1}$ and therefore $$\sum^\infty_{k=1} P\{L_k
>\pi/\sqrt{w^-_{k,\varepsilon}}-c_D\}=\infty.$$  By Lemma
\ref{mainlemma2}, $\mathcal{N}_0(W)=\infty$ $P$-a.s.
\end{proof}

\begin{example} Consider the Bernoulli potential $V_\omega(x)=h\sum \epsilon_k \xi_k(x)$ described
in the introduction. By the Borel-Cantelli lemma, it follows that there exists $P$-a.s a sequence
$x_{n_k}$ of positive integers for which $V(x,\omega)=0$ on $[x_k+1/2,x_{k+1}-1/2)$ and
$L_k=x_{k+1}-x_k-1 \to \infty$.  Since for any $m \in \mathbb{Z}^+$, $P(L_k \geq
m)=q^m=e^{-m\ln(1/q)}$ it follows that $L_k$ have exponential tails.  Hence by Theorem
\ref{mainexample1}, the borderline between finite and infinite negative eigenvalues is given by a
decay in $W(x)$ of the order $c_1/\ln^2{x}$ with $c_1=\pi^2\ln^2(1/q)$.
\end{example}

\bigskip

\clearpage
\end{document}